\documentclass[aps,pra,twocolumn,showpacs,nofootinbib,longbibliography,floatfix,superscriptaddress]{revtex4-1}

\usepackage{amsmath,amssymb,amsfonts,amsthm}
\usepackage{mathtools}
\usepackage{bm}
\usepackage{graphicx}
\usepackage{xcolor}
\usepackage{braket}
\usepackage{hyperref}
\hypersetup{colorlinks=true,linkcolor=blue!50!black,citecolor=blue!50!black,urlcolor=blue!50!black}

\newtheorem{theorem}{Theorem}
\newtheorem{proposition}[theorem]{Proposition}

\newcommand{\GHZ}{\mathrm{GHZ}}
\newcommand{\W}{\mathrm{W}}
\newcommand{\nv}{\vec{n}}

\newcommand{\sgma}{\bm{\sigma}}
\newcommand{\xh}{\hat{x}}
\newcommand{\yh}{\hat{y}}
\newcommand{\zh}{\hat{z}}

\newcommand{\eopV}[3]{\langle\sigma(\nv_{#1},\nv_{#2},\nv_{#3})\rangle}
\newcommand{\E}[3]{E_{#1#2#3}}                    

\newcommand{\Eghz}{\mathcal{E}^{\,\mathrm{al}}_{\GHZ}}   
\newcommand{\Elu}{\mathcal{E}_{\GHZ}}                    
\newcommand{\Igen}{I_{\star}}                            
\newcommand{\tr}{\mathrm{Tr}}

\begin{document}

\title{A measure for genuine tripartite entanglement}

\author{Shengjun Wu$^{1,2*}$, Kaichen Zhong$^{2}$, and Jeffery Wu$^{3}$ \\
{\small\itshape $^1$National Laboratory of Solid State Microstructures and School of Physics,} \\
{\small\itshape Collaborative Innovation Center of Advanced Microstructures, Nanjing University, Nanjing 210093, China}\\
{\small\itshape $^2$Kuang Yaming Honors School, Nanjing University, Nanjing 210023, China}\\
{\small\itshape $^3$School of Physics and Astronomy, Shanghai Jiao Tong University, Shanghai 200240, China}\\[2pt]
{\small $^*$sjwu@nju.edu.cn}}

\date{\today}

\begin{abstract}
We introduce a single real-valued functional $I(\nv_1,\nv_2)$, built from
four three-qubit correlation expectation values, that turns the
Greenberger--Horne--Zeilinger (GHZ) algebraic paradox into a
\emph{quantitative} witness of genuine tripartite entanglement.  We prove
that for every three-qubit state $\rho$ and every pair of measurement
directions $|I(\nv_1,\nv_2;\rho)|\le 2$, with the bound saturated if and
only if $\nv_1\perp\nv_2$ and $\rho$ is locally unitarily equivalent to
the GHZ state.  We obtain a closed-form expression for $I(\xh,\yh)$ on the
five-parameter Ac\'in canonical family of three-qubit pure states; it
depends only on the product $\lambda_0\lambda_4$ and is maximised when
$\lambda_0=\lambda_4=1/\sqrt{2}$.  For the W state we show that
$I(\xh,\yh)=0$ and that $\max_{\nv_1,\nv_2}|I_{\W}|=35/27\approx 1.296$,
strictly below the GHZ value.  Maximising the underlying correlation
structure over \emph{independent} local orthonormal frames on the three
parties yields a manifestly local-unitary (LU) invariant quantity
$\Elu(\rho)\in[0,1]$ that equals one if and only if $\rho$ is LU
equivalent to the GHZ state, takes the value $35/54\approx0.648$ on the W
state, and is bounded by $1/2$ on all biseparable and fully separable
states; it is therefore a device-independent indicator of GHZ-type
genuine tripartite correlation.  We carefully delimit which properties
are proven and which (notably global convexity and the resulting
genuine-multipartite-entanglement witness threshold) are established
numerically and remain open analytically.
We also outline a generalisation of $I$ to three-qudit systems built from
the Heisenberg--Weyl operators, recovering the standard qubit construction
when $d=2$.
\end{abstract}

\pacs{03.67.Mn, 03.65.Ud, 03.67.Ac, 02.10.Ox}

\maketitle

\section{Introduction}\label{sec:motivation}

The Greenberger--Horne--Zeilinger (GHZ) theorem~\cite{ghz1989,ghsz1990}
shows that quantum mechanics conflicts with local realism in a
\emph{deterministic} way for three or more parties, requiring no
inequalities and no averages over runs.  A single algebraic identity
relating four products of dichotomic outcomes is enough to produce a
contradiction~\cite{mermin1990,bk1993,ardehali1992}.  Although the
qualitative content of this identity is celebrated, the quantitative
question \emph{how strongly does a given state $\rho$ exhibit
GHZ-type correlations?} has remained somewhat scattered across the
literature, with answers expressed through Mermin polynomials, fidelities
with $\ket{\GHZ}$, the three-tangle~\cite{ckw2000}, or convex-roof
extensions of various entanglement monotones~\cite{horodecki2009}.  The
purpose of this paper is to formulate and analyse a single functional
$I(\nv_1,\nv_2)$ that
\begin{enumerate}
\item[(i)] exactly reproduces the GHZ algebraic paradox when
  $\nv_1=\xh$, $\nv_2=\yh$;
\item[(ii)] satisfies a tight upper bound, $|I|\le 2$, that is universal
  across all three-qubit states;
\item[(iii)] saturates this bound \emph{only} on the GHZ orbit under
  local unitaries, thereby providing a witness of GHZ-type genuine
  tripartite entanglement;
\item[(iv)] admits closed-form evaluation on canonical state
  families (Ac\'in form, W class, biseparable states); and
\item[(v)] extends naturally from qubits to qudits.
\end{enumerate}

\textit{Position with respect to prior work.}
Our quantity sits at the intersection of three lines of research.
First, in contrast to the standard Mermin
polynomial~\cite{mermin1990,ardehali1992,bk1993,werner2001} and its
genuine-tripartite refinement by Svetlichny and
Seevinck~\cite{svetlichny1987,seevinck2002}, the functional $I$ is
\emph{multiplicative} in the correlators rather than linear; the
algebraic identity~\eqref{eq:LHV-identity} replaces the inequality.
Multiplicative Bell-type quantities have been considered abstractly by
Te'eni \emph{et al.}~\cite{Teeni2019} and (in quadratic form) by
Uffink~\cite{uffink2002}, but never specialised to the GHZ stabiliser.
Second, in contrast to the inequality-free GHZ-vs-W paradoxes of
Cabello~\cite{cabello2002}, our construction yields a single LU-invariant
\emph{scalar} that orders all three-qubit states on a $[0,1]$ scale.
Third, in contrast to the three-tangle $\tau_3$~\cite{ckw2000}, GHZ
fidelity witnesses~\cite{toth2005,bourennane2004}, the genuine
multipartite negativity~\cite{hofmann2014}, and the GME concurrence of
Ma \emph{et al.}~\cite{ma2011}, our $\Elu$ is non-zero on \emph{both}
the W class and biseparable states, with sharp ordering
$\Elu(\GHZ)=1>\Elu(\W)=35/54> 1/2 \ge \Elu(\textnormal{bisep})$.

\textit{Notation.}
Three qubits live in a Hilbert space $\mathcal{H}=\mathbb{C}^2_A\otimes\mathbb{C}^2_B
\otimes\mathbb{C}^2_C$, with state $\rho^{ABC} \in \mathcal{S}(\mathcal{H})$.  Marginals are
$\rho^{AB}=\tr_C\rho^{ABC}$, etc.  For a unit vector
$\nv\in\mathbb{R}^3$, the spin observable on a single qubit is
$\sigma_{\nv}=\nv\cdot\sgma=n_x\sigma_x+n_y\sigma_y+n_z\sigma_z$, with
$\sigma_x,\sigma_y,\sigma_z$ the standard Pauli matrices.  For three
direction labels $\nv_a,\nv_b,\nv_c$ we denote the tripartite local
observable
\begin{equation}
\sigma(\nv_a,\nv_b,\nv_c)\;=\;\sigma_{\nv_a}\otimes\sigma_{\nv_b}\otimes\sigma_{\nv_c},
\end{equation}
and its expectation
$\langle\sigma(\nv_a,\nv_b,\nv_c)\rangle
=\tr[\sigma(\nv_a,\nv_b,\nv_c)\rho^{ABC}]$.

\textit{The functional.}
For two unit vectors $\nv_1,\nv_2\in\mathbb{R}^3$ we define, using the
short-hand $\E{a}{b}{c}\equiv\eopV{a}{b}{c}$,
\begin{equation}\label{eq:Idef}
I(\nv_1,\nv_2; \rho) = \E{1}{1}{1} - \E{1}{2}{2} \E{2}{1}{2} \E{2}{2}{1} .
\end{equation}
The quantity $I$ is a real-valued function of the state $\rho$ and the
two directions; only product correlations of the spin observables on
each subsystem appear, so $I$ is, in principle, accessible to
device-independent estimation from local Pauli-string measurements.

\textit{Local hidden-variable identity.}
For any deterministic local hidden-variable (LHV) model assigning values
$A(\nv),B(\nv),C(\nv)\in\{\pm1\}$ to the three qubits, every realisation
of the underlying parameter $\lambda$ obeys the algebraic identity
\begin{multline}\label{eq:LHV-identity}
\bigl[A(\nv_1)B(\nv_2)C(\nv_2)\bigr]\bigl[A(\nv_2)B(\nv_1)C(\nv_2)\bigr]\\
\times\bigl[A(\nv_2)B(\nv_2)C(\nv_1)\bigr]=A(\nv_1)B(\nv_1)C(\nv_1),
\end{multline}
because each of $A(\nv_2)^2,B(\nv_2)^2,C(\nv_2)^2$ equals $+1$.  Hence
\emph{whenever the three product-correlation factors in $I$ are
simultaneously deterministic at the LHV level}, eq.~\eqref{eq:Idef}
forces
\begin{equation}\label{eq:LHV-zero}
I(\nv_1,\nv_2)\,\big|_{\textnormal{deterministic LHV}}=0,
\end{equation}
which is the standard GHZ--Mermin equality~\cite{mermin1990,ghsz1990}.
For the special case $\nv_1=\xh$, $\nv_2=\yh$ on the GHZ state
$\ket{\GHZ}=(\ket{000}+\ket{111})/\sqrt{2}$, quantum mechanics gives
$\eopV{1}{2}{2}=\eopV{2}{1}{2}=\eopV{2}{2}{1}=-1$ and
$\eopV{1}{1}{1}=+1$, so
\begin{equation}\label{eq:Ighzminus2}
I_{\GHZ}(\xh,\yh)=1- (-1)^3= 2,
\end{equation}
manifestly violating~\eqref{eq:LHV-zero}.  The maximal violation
$|I|=2$ is the algebraic essence of the GHZ paradox.

The remainder of this paper studies the functional $I$
quantitatively: who else can saturate the bound (Sec.~\ref{sec:max}),
what value it takes on a generic three-qubit pure state
(Sec.~\ref{sec:eval}), how it gives rise to a genuine tripartite
entanglement measure (Sec.~\ref{sec:measure}), and how it lifts to
qudits (Sec.~\ref{sec:qudit}).

\section{The four stabiliser-like operators}\label{sec:operators}

It is convenient to package $I(\nv_1,\nv_2)$ in terms of four Hermitian
observables on $\mathcal{H}$:
\begin{equation}\label{eq:OOOO}
\begin{aligned}
O_1 &= \sigma_{\nv_1}\otimes\sigma_{\nv_2}\otimes\sigma_{\nv_2},\\
O_2 &= \sigma_{\nv_2}\otimes\sigma_{\nv_1}\otimes\sigma_{\nv_2},\\
O_3 &= \sigma_{\nv_2}\otimes\sigma_{\nv_2}\otimes\sigma_{\nv_1},\\
O_4 &= \sigma_{\nv_1}\otimes\sigma_{\nv_1}\otimes\sigma_{\nv_1}.
\end{aligned}
\end{equation}
With $e_i\equiv\langle O_i\rangle_\rho$ ($i=1,\dots,4$),
\begin{equation}
I(\nv_1,\nv_2)\;=\;e_4 - e_1 e_2 e_3 .
\end{equation}
Each $O_i$ is unitary and Hermitian and obeys $O_i^2=\openone$, hence
$|e_i|\le 1$ and $|I|\le 2$ trivially.

The algebraic structure of these operators is most transparent in terms
of the inner and cross products of the two direction vectors,
$c\equiv\nv_1\cdot\nv_2$ and $\bm{m}\equiv\nv_1\times\nv_2$.  Using
$\sigma_{\nv_1}\sigma_{\nv_2}=c\,\openone+\mathrm{i}\,\bm{m}\cdot\sgma$
one finds the following two key identities (proved in
App.~\ref{app:identities}):
\begin{align}
[O_1,O_2] &= 2\mathrm{i}\,c\,
   \bigl(\bm{m}\cdot\sgma\!\otimes\!\openone-\openone\!\otimes\!\bm{m}\cdot\sgma\bigr)
   \otimes\openone,\label{eq:O1O2-comm}\\
O_1 O_2 O_3 + O_4
   &= 2c\,\sigma_{\nv_1}\otimes\sigma_{\nv_2}\otimes\sigma_{\nv_1}.
   \label{eq:O1O2O3-O4}
\end{align}
Equation~\eqref{eq:O1O2O3-O4} is the operator avatar of
Mermin's identity~\eqref{eq:LHV-identity}: the right-hand side vanishes
exactly when $\nv_1\perp\nv_2$, in which case
$\{O_1,O_2,O_3,O_4\}$ commute pairwise and satisfy the abelian
relation
\begin{equation}\label{eq:GHZ-stabiliser}
O_1\,O_2\,O_3\,O_4=-\openone\qquad(\text{when }\nv_1\perp\nv_2).
\end{equation}
This is precisely the defining relation of the three-qubit GHZ
stabiliser group~\cite{gottesman1997,nielsen2010} written in the
locally rotated Pauli basis $(\sigma_{\nv_1},\sigma_{\nv_2})$.

\section{Maximum violation: a sharp bound}\label{sec:max}

Our first main result completely characterises the states and
directions saturating $|I|=2$.

\begin{proposition}[Maximum violation]\label{thm:max}
For every three-qubit state $\rho$ and every pair of unit vectors
$\nv_1,\nv_2\in\mathbb{R}^3$ one has
\begin{equation}\label{eq:Ibound}
|I(\nv_1,\nv_2;\rho)|\;\le\;2.
\end{equation}
Equality holds if and only if
\begin{enumerate}
\item[\textnormal{(i)}] $\nv_1\perp\nv_2$, and
\item[\textnormal{(ii)}] $\rho=\ket{\psi}\bra{\psi}$ is pure and a
common eigenstate of the four operators
$\{O_1,O_2,O_3,O_4\}$ with respective eigenvalues
$\epsilon_i=\pm1$ satisfying
$\epsilon_1\epsilon_2\epsilon_3=-\epsilon_4$.
\end{enumerate}
The eight such common eigenstates form an orthonormal basis of
$\mathcal{H}$, all of whose elements are connected to the standard GHZ
state $\ket{\GHZ}=(\ket{000}+\ket{111})/\sqrt{2}$ by local unitary
transformations.
\end{proposition}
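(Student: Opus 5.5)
The plan is to get the bound from the triangle inequality and then extract the equality conditions one at a time. First, since each $O_i$ is Hermitian with $O_i^2=\openone$, we have $|e_i|\le1$. Hence $|I|=|e_4-e_1e_2e_3|\le|e_4|+|e_1e_2e_3|\le 2$, which is \eqref{eq:Ibound}.

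\emph{Necessity.} Suppose $|I|=2$. Both triangle-inequality steps must then be tight, so $|e_i|=1$ for all $i$, with $e_i=\epsilon_i\in\{\pm1\}$ and $\epsilon_4=-\epsilon_1\epsilon_2\epsilon_3$. Because $O_i$ has spectrum $\{\pm1\}$, $\tr(O_i\rho)=\epsilon_i$ forces the support of $\rho$ to lie in the $\epsilon_i$-eigenspace of $O_i$. Writing $P_i^\pm=(\openone\pm O_i)/2$, this follows from $\tr(P_i^{-\epsilon_i}\rho)=0$ together with $\rho\ge0$. Thus $O_i\rho=\epsilon_i\rho$ for every $i$.

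Now apply the operator identity \eqref{eq:O1O2O3-O4} to $\rho$. The left side gives $O_1O_2O_3\rho+O_4\rho=(\epsilon_1\epsilon_2\epsilon_3+\epsilon_4)\rho=0$. The right side gives $2c\,(\sigma_{\nv_1}\otimes\sigma_{\nv_2}\otimes\sigma_{\nv_1})\rho$. That operator is unitary and $\rho\neq0$, so $c=0$, which is condition (i).

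With $\nv_1\perp\nv_2$, the $O_i$ commute pairwise by \eqref{eq:O1O2-comm} and \eqref{eq:GHZ-stabiliser}. In addition $O_4=-O_1O_2O_3$, so $\epsilon_4$ is determined by the other three eigenvalues, consistently with the sign condition.

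To show $\rho$ is pure, I reduce to a standard frame. Let $R\in SO(3)$ send $\nv_1\mapsto\xh$, $\nv_2\mapsto\yh$ and $\nv_1\times\nv_2\mapsto\zh$, and let $U\in SU(2)$ implement $R$ on each qubit. Conjugation by $U^{\otimes3}$ turns $O_1,O_2,O_3$ into $XYY$, $YXY$, $YYX$. These are independent commuting Pauli strings: none is $\pm$ a product of the others up to the identity, which can be checked via their binary symplectic vectors. By the standard stabilizer counting argument, each joint eigenspace labelled by $(\epsilon_1,\epsilon_2,\epsilon_3)$ has dimension $2^3/2^3=1$. Hence $\rho$ is supported on a one-dimensional space and is pure, which is condition (ii).

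\emph{Sufficiency.} A joint eigenstate with $\epsilon_4=-\epsilon_1\epsilon_2\epsilon_3$ gives $I=\epsilon_4-\epsilon_1\epsilon_2\epsilon_3=2\epsilon_4$, so $|I|=2$.

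\emph{The eight eigenstates.} The eight one-dimensional joint eigenspaces are mutually orthogonal and span $\mathcal H$, so they form an orthonormal basis. In the rotated frame, $\ket{\GHZ}$ is one of them, with $e_4=1$ and $e_{1,2,3}=-1$ as computed in \eqref{eq:Ighzminus2}. I then exhibit local Paulis realising every sign pattern:
\begin{itemize}
\item $X_A$ flips $(\epsilon_2,\epsilon_3)$;
\item $X_B$ flips $(\epsilon_1,\epsilon_3)$;
\item $X_C$ flips $(\epsilon_1,\epsilon_2)$;
\item $Z_A$ flips all three.
\end{itemize}
These flips generate all of $\mathbb{Z}_2^3$. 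Every basis element is therefore $U^{\dagger\otimes3}P\ket{\GHZ}$ for some local Pauli product $P$, and so is LU-equivalent to $\ket{\GHZ}$.

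I expect the main points needing care to be the support argument that turns $|e_i|=1$ into $O_i\rho=\epsilon_i\rho$ for mixed $\rho$, and the independence check that makes the joint eigenspaces one-dimensional. The remaining steps reduce to the identities of Sec.~\ref{sec:operators}.
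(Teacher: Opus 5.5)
Your proof is correct and follows the same route as the paper's proof. The shared steps are: the triangle inequality; tightness forcing $|e_i|=1$ and support of $\rho$ in the $\epsilon_i$-eigenspaces; the identity $O_1O_2O_3+O_4=2c\,\sigma_{\nv_1}\otimes\sigma_{\nv_2}\otimes\sigma_{\nv_1}$ forcing $c=0$; and the order-eight stabiliser group yielding eight one-dimensional GHZ-type joint eigenspaces. Your departures are small clarifications: you read the sign condition $\epsilon_4=-\epsilon_1\epsilon_2\epsilon_3$ directly off tightness rather than through the paper's two-case split, you prove purity explicitly by stabiliser counting (the paper leaves this implicit), and you obtain the LU equivalence from explicit local Pauli flips.
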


\begin{proof}
\emph{Upper bound.}  Since $|e_i|\le 1$,
$|I|=|e_4 - e_1 e_2 e_3 |\le |e_4|+ |e_1 e_2 e_3|\le 2$.

\emph{Necessity of $|e_i|=1$.}
Saturation $|I|=2$ requires $|e_1 e_2 e_3|=1$ \emph{and} $|e_4|=1$.
In particular each $|e_i|=1$.  For a Hermitian operator $O$ with
eigenvalues $\pm1$ and a normalised state $\rho$, $|\langle O\rangle_\rho|=1$
forces $\rho$ to be supported in a single eigenspace of $O$.  Hence
$\rho$ lies in the simultaneous $\pm1$-eigenspace of all four $O_i$; in
particular every pure-state component $\ket{\psi}$ in the support
satisfies $O_i\ket\psi=\epsilon_i\ket\psi$.

\emph{Necessity of $\nv_1\perp\nv_2$.}
Acting~\eqref{eq:O1O2O3-O4} on a common eigenstate yields
\begin{equation}\label{eq:eigvalsum}
(\epsilon_1\epsilon_2\epsilon_3+\epsilon_4)\ket\psi
=2c\,\bigl(\sigma_{\nv_1}\otimes\sigma_{\nv_2}\otimes\sigma_{\nv_1}\bigr)\ket\psi.
\end{equation}
The operator on the right is unitary with spectrum $\{\pm1\}$, so the
right-hand side has norm exactly $2|c|$.  The left-hand side has norm
$|\epsilon_1\epsilon_2\epsilon_3+\epsilon_4|\in\{0,2\}$.  Two cases
arise:
\begin{itemize}
\item If $\epsilon_1\epsilon_2\epsilon_3=\epsilon_4$ then
$I=\epsilon_1\epsilon_2\epsilon_3-\epsilon_4=0$, contradicting
$|I|=2$.
\item If $\epsilon_1\epsilon_2\epsilon_3=-\epsilon_4$ then the left-hand
side of~\eqref{eq:eigvalsum} vanishes, forcing $c=\nv_1\cdot\nv_2=0$.
\end{itemize}
Hence saturation requires $\nv_1\perp\nv_2$ and the sign relation
$\epsilon_1\epsilon_2\epsilon_3=-\epsilon_4$, equivalently
$\epsilon_1\epsilon_2\epsilon_3\epsilon_4=-1$, which is consistent
with~\eqref{eq:GHZ-stabiliser}.

\emph{Achievability and LU-equivalence to GHZ.}
For $\nv_1\perp\nv_2$ the four operators in~\eqref{eq:OOOO}
commute and generate an abelian group of order eight, with
relation~\eqref{eq:GHZ-stabiliser}.  Three of them (say $O_1,O_2,O_3$)
are independent and have a joint orthonormal basis of $\pm1$
eigenstates~\cite{gottesman1997}; the fourth eigenvalue is then
$\epsilon_4=-\epsilon_1\epsilon_2\epsilon_3$.  Hence
$I=\epsilon_4- \epsilon_1\epsilon_2\epsilon_3= - 2\epsilon_1\epsilon_2\epsilon_3
=\pm2$ on each of the eight common eigenstates.

When $\nv_1=\xh,\nv_2=\yh$ this basis is the textbook GHZ basis
\begin{equation}\label{eq:GHZbasis}
\tfrac{1}{\sqrt{2}}\bigl(\ket{ijk}\pm\ket{\bar i\bar j\bar k}\bigr),
\qquad i,j,k\in\{0,1\},
\end{equation}
where $\bar 0=1$, $\bar1=0$.  All eight states in~\eqref{eq:GHZbasis}
are LU-equivalent to $\ket{\GHZ}$ via local Pauli flips and phase
gates.  For arbitrary orthogonal $(\nv_1,\nv_2)$ the joint basis is the
image of~\eqref{eq:GHZbasis} under the local unitary $U^{\otimes3}$
that sends $(\xh,\yh)$ to $(\nv_1,\nv_2)$, and is therefore again
LU-equivalent to $\ket{\GHZ}$.
\end{proof}

\textit{Bell-type interpretation.}
Proposition~\ref{thm:max} is more than a Cirel'son-type bound~\cite{cirelson1980}:
the LHV identity~\eqref{eq:LHV-zero} predicts $I=0$ whereas quantum
mechanics achieves $|I|=2$ exactly on the GHZ orbit.  In other words,
$I(\nv_1,\nv_2)$ is an \emph{equality-form} GHZ witness whose violation
\emph{measures} how close $\rho$ is, in the sense of stabiliser
correlations, to a GHZ-type state.  We emphasise that the saturation
condition is sharper than what is available for the linear Mermin
operator: the closed-form upper bound of Siddiqui and
Sazim~\cite{siddiqui2019} for the linear functional
$M_3=\langle\sigma_{xxx}\rangle-\langle\sigma_{xyy}\rangle-\langle\sigma_{yxy}\rangle-\langle\sigma_{yyx}\rangle$
involves diagonalisation of state-dependent correlation matrices,
whereas our multiplicative bound \emph{factorises} and is achieved on
the pure GHZ orbit alone.  The complementary ``inequality-free''
viewpoint of Cabello~\cite{cabello2002}, which produces an
all-versus-nothing separator of GHZ and W via incompatible composite
observables, parallels but does not subsume the present quantitative
witness.

\section{Evaluation on canonical pure states}\label{sec:eval}

We now evaluate $I$ on prototypical three-qubit pure states.

\subsection{The Ac\'in canonical family}\label{sec:acin}

Up to local unitaries, every pure three-qubit state can be brought to
the Ac\'in canonical form~\cite{acin2000,acin2001}
\begin{equation}\label{eq:acin}
\ket{\Psi}=\lambda_0\ket{000}+\lambda_1\mathrm{e}^{\mathrm{i}\varphi}\ket{100}
+\lambda_2\ket{101}+\lambda_3\ket{110}+\lambda_4\ket{111},
\end{equation}
with $\lambda_i\ge 0$, $\sum_i\lambda_i^2=1$, and
$\varphi\in[0,\pi]$.  This parametrisation has five independent real parameters.

A direct calculation (App.~\ref{app:acin}) yields
\begin{align}
\langle\sigma_x^{\otimes3}\rangle_{\ket\Psi}&=2\lambda_0\lambda_4,\\
\langle\sigma_x\sigma_y\sigma_y\rangle_{\ket\Psi}
=\langle\sigma_y\sigma_x\sigma_y\rangle_{\ket\Psi}
=\langle\sigma_y\sigma_y\sigma_x\rangle_{\ket\Psi}
&=-2\lambda_0\lambda_4,
\end{align}
\emph{independently of} $\lambda_1,\lambda_2,\lambda_3,\varphi$.
Substituting in~\eqref{eq:Idef} we obtain the closed form
\begin{equation}\label{eq:I-acin}
\;I(\xh,\yh;\ket\Psi)= 2\lambda_0\lambda_4\bigl[4(\lambda_0\lambda_4)^2+1\bigr].\;
\end{equation}
Maximising $|I|$ over the simplex $\{\lambda_i^2:\sum_i\lambda_i^2=1,\,
\lambda_i\ge0\}$ reduces to maximising the single product
$\mu\equiv\lambda_0\lambda_4$, which by Cauchy--Schwarz is bounded by
$\mu\le 1/2$ with equality iff $\lambda_0=\lambda_4=1/\sqrt{2}$ and
all other $\lambda_i=0$.  This is precisely the GHZ state, and
\begin{equation}
|I_{\GHZ}(\xh,\yh)|=8\,(\tfrac12)^3+2\cdot\tfrac12=1+1=2,
\end{equation}
in agreement with Proposition~\ref{thm:max}.  The function
$|I(\xh,\yh)|=8\mu^3+2\mu$ is monotonically increasing in $\mu$ on
$[0,1/2]$ [Fig.~\ref{fig:main}(a)].  In particular every state with
$\lambda_0\lambda_4=0$ — including all states whose Ac\'in form has no
$\ket{111}$ component — gives $I(\xh,\yh)=0$.  This is the case for
the W state, see Sec.~\ref{sec:W} below.

A further consequence of~\eqref{eq:I-acin}: on the GHZ subfamily
$\lambda_1=\lambda_2=\lambda_3=0$, $\varphi=0$, namely
$\ket{\Psi(\beta)}=\cos\beta\ket{000}+\sin\beta\ket{111}$, one has
$\lambda_0\lambda_4=\tfrac12\sin(2\beta)$ and hence
\begin{equation}
I(\xh,\yh;\ket{\Psi(\beta)})= \sin^3(2\beta) + \sin(2\beta),
\end{equation}
which is determined solely by the Schmidt angle and reaches the GHZ
extremum $|I|=2$ at $\beta=\pi/4$.

\textit{Relation to the three-tangle.}
On the Ac\'in canonical family the Coffman--Kundu--Wootters
three-tangle equals $\tau_3=4\lambda_0^2\lambda_4^2$~\cite{ckw2000}.
Equation~\eqref{eq:I-acin} therefore admits the equivalent form
\begin{equation}\label{eq:I-tau3}
I(\xh,\yh;\ket\Psi)= \sqrt{\tau_3}\,(\tau_3+1),
\end{equation}
exhibiting $|I(\xh,\yh)|$ as a strictly increasing degree-3 polynomial
in $\sqrt{\tau_3}$ on the canonical pure-state slice.  However, this
identification is \emph{specific to the canonical form and to the
direction pair $(\xh,\yh)$}: on a generic state, neither
representative of an LU orbit need be in Ac\'in form, and $I$ depends
on $(\nv_1,\nv_2)$.  Accordingly, the invariant measure $\Elu$ of
Sec.~\ref{sec:measure} is
\emph{not} a function of $\tau_3$ on arbitrary states: most strikingly,
$\tau_3(\W)=0$~\cite{ckw2000} while $\Elu(\W)=35/54$ (see
Sec.~\ref{sec:measure} below).  This separation reflects that $\Elu$ couples
to the entire \emph{stabiliser-correlation} structure of the GHZ
class, not merely to the residual hyperdeterminant captured by
$\tau_3$~\cite{eltschka2008}.

\begin{figure*}[t]
\centering
\includegraphics[width=0.85\textwidth]{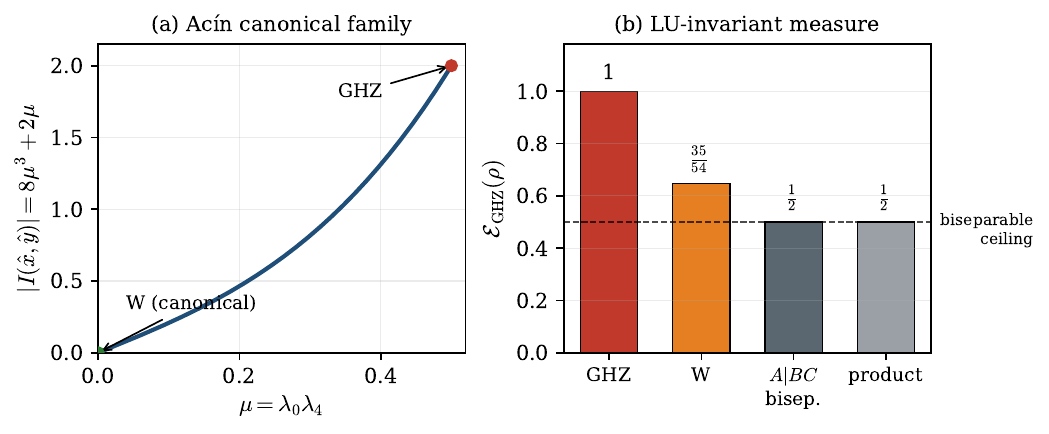}
\caption{(a) The functional $|I(\xh,\yh)|$ evaluated on the Ac\'in
canonical family~\eqref{eq:acin} depends only on the product
$\mu=\lambda_0\lambda_4$, with the closed form $|I|=8\mu^3+2\mu$
[Eq.~\eqref{eq:I-acin}].  The maximum value $2$ is attained at
$\mu=1/2$, the GHZ state.  The W state has $\mu=0$ in its canonical
form, giving $I(\xh,\yh)=0$.  (b) The local-unitary invariant measure
$\Elu$ of Sec.~\ref{sec:measure}, obtained by maximising the
correlation structure over \emph{independent} orthonormal frames on the
three parties, for four representative classes of three-qubit states.
The GHZ state attains the maximal value $\Elu=1$; the W state achieves
$35/54\approx 0.648$ (the independent-frame optimisation does not exceed
the aligned-frame value here, due to the permutation symmetry of
$\ket\W$); biseparable $A|BC$ states and fully separable states are both
bounded by, and numerically attain, $1/2$.  Values are obtained by
multistart Powell optimisation over $120$ random initial frame triples
and verified to machine precision against the exact values of
Sec.~\ref{sec:measure}.}
\label{fig:main}
\end{figure*}

\subsection{The W state}\label{sec:W}

The W state $\ket\W=(\ket{001}+\ket{010}+\ket{100})/\sqrt{3}$ is the
representative of the W class, which is SLOCC-inequivalent to the GHZ
class~\cite{dvc2000}.  Direct computation of all 27 three-qubit Pauli
correlators yields
\begin{equation}\label{eq:Wcorr}
\begin{aligned}
\langle\sigma_z^{\otimes3}\rangle_\W&=-1,\\
\langle\sigma_a \otimes \sigma_a \otimes \sigma_z\rangle_\W&=\langle\sigma_a \otimes \sigma_z \otimes \sigma_a\rangle_\W \\
&=\langle\sigma_z\otimes \sigma_a\otimes  \sigma_a\rangle_\W =\tfrac23,
\end{aligned}
\end{equation}
for $a\in\{x,y\}$,
all other components vanishing.

\textit{Special directions.}
With $\nv_1=\xh$, $\nv_2=\yh$, every correlator
appearing in $I$ vanishes by~\eqref{eq:Wcorr}; therefore
\begin{equation}\label{eq:I-W-xy}
I_\W(\xh,\yh)=0.
\end{equation}
By contrast, with $\nv_1=\zh$, $\nv_2=\xh$ (or any
$\nv_2$ in the $\xh\!\text{-}\yh$ plane, by the $\zh$-rotation symmetry of
$\ket\W$) one obtains
\begin{equation}\label{eq:I-W-zx}
I_\W(\zh,\xh)=(-1)-\bigl(\tfrac23\bigr)^3=-1-\frac{8}{27}=-\frac{35}{27}.
\end{equation}

\textit{Maximum over directions.}
For arbitrary orthogonal pairs we set $a_3\equiv\zh\cdot\nv_1$,
$b_3\equiv\zh\cdot\nv_2$.  Using~\eqref{eq:Wcorr} together with
$\nv_1\cdot\nv_2=0$ and $\nv_2\cdot\nv_2=1$ one obtains the explicit
formula (derived in App.~\ref{app:Wmax})
\begin{equation}\label{eq:I-W-formula}
I_\W(\nv_1,\nv_2)=
 a_3\bigl(2-3a_3^2\bigr) -a_3^3\Bigl(\tfrac23-3b_3^2\Bigr)^{\!3}.
\end{equation}
A short Lagrangian analysis (App.~\ref{app:Wmax}) shows that the global
extremum of $|I_\W|$ on the constraint surface
$a_3^2+b_3^2\le 1$ is attained at $(a_3,b_3)=(\pm1,0)$, giving
\begin{equation}\label{eq:I-W-max}
\max_{\nv_1,\nv_2}|I_\W(\nv_1,\nv_2)|=\frac{35}{27}\approx 1.296.
\end{equation}
This is strictly less than the GHZ value $2$, reflecting the well-known
fact that the W state lies outside the GHZ class~\cite{dvc2000}.
Equation~\eqref{eq:I-W-max} can therefore be regarded as a sharp
\emph{quantitative} separator of the two SLOCC classes via $I$.  To
the best of our knowledge the value $35/27$ has not previously appeared
in the Mermin-functional literature: under the linear Mermin operator,
recent analytical work~\cite{siddiqui2019} reports
$\max\langle M_3\rangle_\W\approx 3.046$ (out of an algebraic ceiling
of $4$), while the present multiplicative functional gives the simpler
closed form~\eqref{eq:I-W-max}.

\textit{A remark on orthogonal versus general settings.}
It is worth emphasising a structural contrast between the linear and the
multiplicative functionals.  The Mermin maximum
$\max\langle M_3\rangle_\W\approx 3.046$ is achieved only over
\emph{general} (non-orthogonal) measurement directions; if one restricts
the two Mermin settings to be mutually orthogonal on each party, the W
state attains exactly $\langle M_3\rangle_\W=3$, and the extra
$0.046$ requires tilting the settings away from orthogonality.  We have
verified both numbers numerically (the orthogonal value $3$ analytically,
the general value $3.046$ by multistart optimisation over unconstrained
unit vectors).  Our functional $I$ behaves oppositely: by
Proposition~\ref{thm:max} its algebraic ceiling is reached \emph{exactly}
at orthogonal settings $\nv_1\perp\nv_2$, and tilting away from
orthogonality can only decrease $|I|$ on the GHZ orbit.  Thus the GHZ
paradox is most naturally a statement about orthogonal (maximally
complementary) local bases, which is precisely the regime singled out by
the multiplicative form.

\subsection{Biseparable and product states}\label{sec:bisep}

For an $A|BC$ biseparable state $\rho=\rho_A\otimes\rho_{BC}$,
all four expectations factorise as
$\eopV{a}{b}{c}=\langle\sigma_{\nv_a}\rangle_A\cdot
\langle\sigma_{\nv_b}\otimes\sigma_{\nv_c}\rangle_{BC}$.  When
$\rho_A=\ket{0}\bra{0}$ and $\rho_{BC}=\ket{\Phi^+}\bra{\Phi^+}$ with
$\ket{\Phi^+}=(\ket{00}+\ket{11})/\sqrt2$, taking $\nv_1=\zh,\nv_2=\xh$
yields $e_1=\langle\sigma_z\rangle\langle\sigma_{xx}\rangle=1\cdot 1=1$,
$e_2=e_3=0$, $e_4=\langle\sigma_z\rangle\langle\sigma_{zz}\rangle
=1\cdot 1=1$, so $I=1$.  Numerical optimisation over
all directions gives
$\sup_{\nv_1\perp\nv_2}|I_{\textnormal{bisep}}|=1$ for the shared-frame
functional, in agreement with Proposition~\ref{thm:max} forbidding
saturation by biseparable states.  Fully separable (product) states
likewise reach at most $1$.  The corresponding values of the LU-invariant
measure $\Elu$ of Sec.~\ref{sec:measure}, which optimises over independent
local frames and carries an extra factor $1/2$, are
$\Elu(\textnormal{bisep})=\Elu(\textnormal{prod})=1/2$
[Fig.~\ref{fig:main}(b)]; see Sec.~\ref{sec:measure} and
App.~\ref{app:convex} for the derivation.

\textit{Random pure states.}
Sampling Haar-random three-qubit pure states and numerically
maximising $|I|$ over directions, we find $\sup|I|<2$ in every case;
random pure states cluster around values $0.4$--$1.6$, consistent with
the fact that the GHZ orbit is a measure-zero subset of pure-state
space.

\section{A genuine GHZ-type entanglement measure}\label{sec:measure}

The quantitative content of Proposition~\ref{thm:max} suggests building a
state functional by optimising $|I|$ over measurement directions.  Doing
this naively, however, runs into an invariance subtlety that we now make
explicit and then resolve.

\subsection{The aligned-frame quantity and its invariance defect}
\label{sec:aligned}

The most direct construction reuses the \emph{same} ordered orthonormal
pair $(\nv_1,\nv_2)$ on all three parties,
\begin{equation}\label{eq:Ealigned}
\Eghz(\rho)\;=\;\frac{1}{2}\sup_{\nv_1\perp\nv_2}
\bigl|I(\nv_1,\nv_2;\rho)\bigr| ,
\end{equation}
which we call the \emph{aligned-frame} quantity.  By
Proposition~\ref{thm:max}, $\Eghz\in[0,1]$ and $\Eghz(\rho)=1$ if and only
if $\rho$ is LU equivalent to $\ket{\GHZ}$.  It is, however, \emph{not}
invariant under general local unitaries.  A collective rotation
$U^{\otimes3}$ with $U\in SU(2)$ does leave~\eqref{eq:Ealigned} unchanged,
because the induced $R\in SO(3)$ maps the orthonormal pair
$(\nv_1,\nv_2)\mapsto(R^{-1}\nv_1,R^{-1}\nv_2)$, again orthonormal, and the
supremum runs over all such pairs.  Thus $\Eghz$ is invariant under the
diagonal subgroup $\{U\otimes U\otimes U\}$.  Under \emph{independent}
local unitaries $U_A\otimes U_B\otimes U_C$ the three parties' Pauli bases
rotate by three \emph{different} elements of $SO(3)$, which a single
shared frame cannot track.  Numerically the violation is dramatic: for
the GHZ state one finds $\Eghz(\ket{\GHZ})=1$, but under randomly drawn
independent local unitaries the same state yields values that scatter
between roughly $0.52$ and $0.67$; the W state's aligned-frame value
$0.648$ likewise drops to $0.36$--$0.49$.  

\subsection{Independent local frames and the invariant measure}
\label{sec:indep}

The fix is to allow each party its own orthonormal frame.  Let
$(\hat a_1,\hat a_2)$, $(\hat b_1,\hat b_2)$, $(\hat c_1,\hat c_2)$ be
orthonormal pairs on $A$, $B$, $C$, and set
\begin{multline}\label{eq:Istar}
\Igen(\hat a_1,\hat a_2,\hat b_1,\hat b_2,\hat c_1,\hat c_2;\rho)
=\langle\sigma_{\hat a_1}\sigma_{\hat b_1}\sigma_{\hat c_1}\rangle\\
-\langle\sigma_{\hat a_1}\sigma_{\hat b_2}\sigma_{\hat c_2}\rangle
 \langle\sigma_{\hat a_2}\sigma_{\hat b_1}\sigma_{\hat c_2}\rangle
 \langle\sigma_{\hat a_2}\sigma_{\hat b_2}\sigma_{\hat c_1}\rangle,
\end{multline}
the natural independent-frame analogue of~\eqref{eq:Idef} (here
$\sigma_{\hat a}\sigma_{\hat b}\sigma_{\hat c}$ abbreviates
$\sigma_{\hat a}\otimes\sigma_{\hat b}\otimes\sigma_{\hat c}$).  It is
governed by the four Hermitian observables
\begin{equation}\label{eq:Aops}
\begin{aligned}
A_1&=\sigma_{\hat a_1}\otimes\sigma_{\hat b_1}\otimes\sigma_{\hat c_1},&
A_2&=\sigma_{\hat a_1}\otimes\sigma_{\hat b_2}\otimes\sigma_{\hat c_2},\\
A_3&=\sigma_{\hat a_2}\otimes\sigma_{\hat b_1}\otimes\sigma_{\hat c_2},&
A_4&=\sigma_{\hat a_2}\otimes\sigma_{\hat b_2}\otimes\sigma_{\hat c_1},
\end{aligned}
\end{equation}
so that $\Igen=\langle A_1\rangle-\langle A_2\rangle\langle A_3\rangle
\langle A_4\rangle$.  Because each party contributes either a repeated
vector or its orthogonal partner, any two of the $A_i$ differ by an
\emph{anticommuting} single-qubit factor on an even number of sites; one
checks (App.~\ref{app:convex}) that
\begin{equation}\label{eq:Acommute}
[A_i,A_j]=0\quad(\text{all }i,j),\qquad A_1A_2A_3A_4=-\openone .
\end{equation}
Define the \emph{local-unitary invariant} measure
\begin{equation}\label{eq:Elu}
\Elu(\rho)\;=\;\frac{1}{2}\,
\sup_{\substack{\hat a_1\perp\hat a_2,\ \hat b_1\perp\hat b_2,\\
\hat c_1\perp\hat c_2}}
\bigl|\Igen(\hat a_1,\hat a_2,\hat b_1,\hat b_2,\hat c_1,\hat c_2;\rho)\bigr|.
\end{equation}

\begin{proposition}[Identification and LU invariance]\label{prop:luequiv}
For every three-qubit state $\rho$, writing
$V=U_A\otimes U_B\otimes U_C$,
\begin{equation}\label{eq:maxU}
\Elu(\rho)=\max_{U_A,U_B,U_C\in SU(2)}
\Eghz\!\left(V\rho V^\dagger\right),
\end{equation}
and consequently $\Elu$ is invariant under arbitrary local unitaries.
\end{proposition}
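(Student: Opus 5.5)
The plan is to reduce both sides of \eqref{eq:maxU} to the same set of numbers. The tool is the adjoint action of $SU(2)$ on Pauli observables. For $U\in SU(2)$ with associated rotation $R_U\in SO(3)$ we have $U^\dagger\sigma_{\nv}U=\sigma_{R_U\nv}$, with the sign convention of $R_U$ fixed once and for all. Consequently, for $V=U_A\otimes U_B\otimes U_C$,
\begin{equation*}
\tr\bigl[(\sigma_{\nv_a}\otimes\sigma_{\nv_b}\otimes\sigma_{\nv_c})V\rho V^\dagger\bigr]
=\tr\bigl[(\sigma_{R_A\nv_a}\otimes\sigma_{R_B\nv_b}\otimes\sigma_{R_C\nv_c})\rho\bigr].
\end{equation*}
We apply this to each of the four correlators in \eqref{eq:Idef}. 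This gives the identity
\begin{equation*}
I(\nv_1,\nv_2;V\rho V^\dagger)=\Igen(R_A\nv_1,R_A\nv_2,R_B\nv_1,R_B\nv_2,R_C\nv_1,R_C\nv_2;\rho).
\end{equation*}
To confirm it, we check that the index pattern of \eqref{eq:Idef} maps onto that of \eqref{eq:Istar}. The term $E_{111}$ becomes $\langle A_1\rangle$. The terms $E_{122},E_{212},E_{221}$ become $\langle A_2\rangle,\langle A_3\rangle,\langle A_4\rangle$. This holds with $\hat a_i=R_A\nv_i$, $\hat b_i=R_B\nv_i$, $\hat c_i=R_C\nv_i$.

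\emph{Inequality $\ge$.} Rotations preserve orthonormality, so every pair $(\nv_1,\nv_2)$ with $\nv_1\perp\nv_2$ is sent to an admissible triple of orthonormal frames. Hence every value $\tfrac12|I(\nv_1,\nv_2;V\rho V^\dagger)|$ is bounded by $\Elu(\rho)$. Taking the supremum over $(\nv_1,\nv_2)$ and then over $V$ gives $\sup_V\Eghz(V\rho V^\dagger)\le\Elu(\rho)$.

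\emph{Inequality $\le$.} Fix any admissible orthonormal pairs $(\hat a_1,\hat a_2)$, $(\hat b_1,\hat b_2)$, $(\hat c_1,\hat c_2)$ and take $\nv_1=\xh$, $\nv_2=\yh$. Let $R_A$ be the rotation with columns $\hat a_1,\hat a_2,\hat a_1\times\hat a_2$, so $R_A\in SO(3)$ with $R_A\xh=\hat a_1$ and $R_A\yh=\hat a_2$; define $R_B$ and $R_C$ the same way. Because $SU(2)\to SO(3)$ is surjective, each $R$ lifts to some $U\in SU(2)$; the $\pm U$ ambiguity is irrelevant under conjugation. The identity above then shows that
\begin{equation*}
\tfrac12|\Igen(\cdot;\rho)|=\tfrac12|I(\xh,\yh;V\rho V^\dagger)|\le\Eghz(V\rho V^\dagger).
\end{equation*}
Taking the supremum over frames gives the reverse inequality, so the two quantities are equal.

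\emph{Sup versus max.} $\Eghz(V\rho V^\dagger)$ is a supremum of a continuous function over a compact set, so it is attained. Likewise the set of $V$ is $SU(2)^{3}$, which is compact. So all suprema are maxima, and the maximum in \eqref{eq:maxU} is justified.

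\emph{LU invariance.} Take $W=W_A\otimes W_B\otimes W_C$. The same conjugation rule gives
\begin{equation*}
\Igen(\{\hat a_i,\hat b_i,\hat c_i\};W\rho W^\dagger)=\Igen(\{R_{W_A}\hat a_i,R_{W_B}\hat b_i,R_{W_C}\hat c_i\};\rho).
\end{equation*}
The map $(\hat a_i,\hat b_i,\hat c_i)\mapsto(R_{W_A}\hat a_i,\dots)$ is a bijection of the admissible frame set onto itself, so the supremum in \eqref{eq:Elu} is unchanged.

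The main obstacle is pure bookkeeping: the index pattern of $I$ under the three independent rotations must land exactly on the $A_1,\dots,A_4$ pattern of \eqref{eq:Istar}, and the handedness of each frame must be completed inside $SO(3)$ rather than $O(3)$. I handle handedness by defining the third column as the cross product $\hat a_1\times\hat a_2$, as above.
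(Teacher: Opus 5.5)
Your proof is correct and follows the paper's own argument. Conjugation by $U_A\otimes U_B\otimes U_C$ turns the aligned pair into three independently rotated orthonormal frames, and transitivity of $SO(3)$ on ordered orthonormal pairs, together with surjectivity of $SU(2)\to SO(3)$, makes these frames exhaust all admissible triples; your two-inequality split, explicit right-handed completion, and direct frame-bijection proof of LU invariance are only cosmetic refinements of that argument.

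One small point on your sup-versus-max step: compactness of $SU(2)^3$ alone does not give attainment, since you also need continuity of $V\mapsto\Eghz(V\rho V^\dagger)$. More simply, the frame supremum defining $\Elu$ is attained on a compact set, and your explicit lift then produces a maximising $V$.
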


\begin{proof}
Conjugating $\rho$ by $U_A\otimes U_B\otimes U_C$ and then evaluating the
aligned functional $I(\nv_1,\nv_2)$ replaces each single-qubit observable
$\sigma_{\nv}$ on party $X$ by $\sigma_{R_X\nv}$, where $R_X\in SO(3)$ is
the rotation associated with $U_X$.  Thus the aligned pair $(\nv_1,\nv_2)$
is seen by the three parties as $(R_A\nv_1,R_A\nv_2)$,
$(R_B\nv_1,R_B\nv_2)$, $(R_C\nv_1,R_C\nv_2)$.  Since $SO(3)$ acts
transitively on ordered orthonormal pairs, as $(U_A,U_B,U_C,\nv_1,\nv_2)$
range over all values these triples of pairs range over \emph{all} triples
of independent orthonormal frames in~\eqref{eq:Elu}.  Hence the right-hand
side of~\eqref{eq:maxU} equals $\Elu(\rho)$.  Invariance is immediate:
the maximisation over all local unitaries absorbs any further local
unitary applied to $\rho$.
\end{proof}

We can now reinstate, correctly, the saturation theorem at the level of
the invariant measure.

\begin{proposition}[Genuine GHZ indicator]\label{prop:satlu}
$\Elu(\rho)\in[0,1]$ for every three-qubit state, and $\Elu(\rho)=1$ if
and only if $\rho$ is LU equivalent to $\ket{\GHZ}$.
\end{proposition}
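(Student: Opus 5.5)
The plan is to reduce everything to Proposition~\ref{thm:max} via the identification of Proposition~\ref{prop:luequiv}. By Eq.~\eqref{eq:maxU}, $\Elu(\rho)=\max_V \Eghz(V\rho V^\dagger)$. Each $\Eghz(V\rho V^\dagger)$ is half a supremum of $|I|$ over orthonormal pairs, and $|I|\le 2$ by Eq.~\eqref{eq:Ibound}; hence $\Eghz\in[0,1]$ and therefore $\Elu\in[0,1]$. We can also bound $\Igen$ directly: each $A_i$ squares to the identity, so $|\langle A_i\rangle|\le1$ and $|\Igen|\le 2$.

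For the ``if'' direction, let $\rho=W\ket{\GHZ}\bra{\GHZ}W^\dagger$ with $W$ a local unitary. Choosing $V=W^\dagger$ in Eq.~\eqref{eq:maxU} gives $\Elu(\rho)\ge\Eghz(\ket{\GHZ}\bra{\GHZ})\ge\tfrac12|I_{\GHZ}(\xh,\yh)|=1$ by Eq.~\eqref{eq:Ighzminus2}. Combined with the upper bound, this yields $\Elu(\rho)=1$. Alternatively, LU invariance from Proposition~\ref{prop:luequiv} reduces the claim to $\rho=\ket{\GHZ}\bra{\GHZ}$ directly.

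For the ``only if'' direction, the main obstacle is that $\Elu$ is a supremum, so $\Elu=1$ need not be attained a priori. I would first establish attainment by compactness. The map $(\hat a_1,\hat a_2,\dots,\hat c_2)\mapsto\Igen$ is a polynomial in the vector components, hence continuous, and the domain is a product of three copies of the Stiefel manifold of orthonormal pairs in $\mathbb{R}^3$ (each $\cong SO(3)$), which is compact. So the supremum is attained at some frames, and equivalently, by Proposition~\ref{prop:luequiv}, at some $V$ and some orthonormal $(\nv_1,\nv_2)$ with $|I(\nv_1,\nv_2;V\rho V^\dagger)|=2$. Proposition~\ref{thm:max} then applies to the state $V\rho V^\dagger$: it must be a pure common eigenstate of $O_1,\dots,O_4$, and all such eigenstates are LU equivalent to $\ket{\GHZ}$. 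Therefore $\rho=V^\dagger(V\rho V^\dagger)V$ is LU equivalent to $\ket{\GHZ}$.

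The remaining check is that the realisation of independent frames as $(R_X\nv_1,R_X\nv_2)$ used in Proposition~\ref{prop:luequiv} also holds at the level of a single maximiser. This is true because $SO(3)$ acts transitively on ordered orthonormal pairs, so each optimal frame triple corresponds to a concrete $V$ and aligned pair. A direct alternative avoids this step. Suppose $|\Igen|=2$ at optimal frames. Then each $|\langle A_i\rangle|=1$, so $\rho$ is supported on a joint eigenspace of the commuting $A_i$ of Eq.~\eqref{eq:Acommute}. Three independent commuting Pauli-type operators with $\pm1$ eigenvalues have one-dimensional joint eigenspaces, so $\rho$ is pure, and it is an LU image (local frame rotations) of a GHZ-basis state from Eq.~\eqref{eq:GHZbasis}.
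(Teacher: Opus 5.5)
Your proof is correct, but your main route differs from the paper's. The paper works directly with independent frames. It uses the commutation relations and the identity $A_1A_2A_3A_4=-\openone$ of Eq.~\eqref{eq:Acommute} to show that $|\Igen|=2$ forces $\rho$ to be a pure joint eigenstate of the $A_i$. It then rotates each party's frame to $(\xh,\yh)$ to reach the GHZ basis~\eqref{eq:GHZbasis}; this is essentially your ``direct alternative''.

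Your main route takes a maximising frame triple and uses the transitivity of $SO(3)$ on ordered orthonormal pairs to turn it into a local unitary $V$ and an aligned orthonormal pair with $|I(\nv_1,\nv_2;V\rho V^\dagger)|=2$. You then invoke Proposition~\ref{thm:max}. This reuses the eigenspace analysis already done for the aligned operators $O_i$ instead of redoing it for the $A_i$, at the cost of some bookkeeping ($A_1\leftrightarrow O_4$, $A_2\leftrightarrow O_1$, and so on). The paper's route is self-contained at the independent-frame level and shows the stabiliser structure explicitly. It also skips the $c\neq 0$ case analysis of Proposition~\ref{thm:max}, since orthogonality is built into the frames.

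Your compactness step is a genuine improvement. The paper goes from $\Elu(\rho)=1$ to ``saturation $|\Igen|=2$'' without noting that the supremum in Eq.~\eqref{eq:Elu} is attained. Your observation that $\Igen$ is continuous on the compact product of three Stiefel manifolds closes that gap. It also justifies the $\max$ written in Eq.~\eqref{eq:maxU}.

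In the direct alternative, ``independent'' should be read as: no nonempty product of $A_1,A_2,A_3$ equals $\pm\openone$. This is easy to check after rotating to $\sigma_x\otimes\sigma_x\otimes\sigma_x$, $\sigma_x\otimes\sigma_y\otimes\sigma_y$, $\sigma_y\otimes\sigma_x\otimes\sigma_y$. The one-dimensionality of the joint eigenspaces rests on exactly this condition.
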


\begin{proof}
Each $A_i$ is a $\pm1$-valued Hermitian unitary, so
$|\langle A_i\rangle|\le1$ and $|\Igen|\le|\langle A_1\rangle|
+|\langle A_2\rangle\langle A_3\rangle\langle A_4\rangle|\le2$, giving
$\Elu\le1$.  Saturation $|\Igen|=2$ forces $|\langle A_i\rangle|=1$ for
all $i$, hence $\rho$ is supported in a joint $\pm1$ eigenspace of the
four commuting operators~\eqref{eq:Acommute}.  These generate an abelian
group with three independent generators (App.~\ref{app:convex}), whose
eight common eigenspaces are one-dimensional and span $\mathcal H$; on
each, $A_1A_2A_3A_4=-\openone$ gives
$\epsilon_1\epsilon_2\epsilon_3\epsilon_4=-1$ and
$\Igen=\epsilon_1-\epsilon_2\epsilon_3\epsilon_4=2\epsilon_1=\pm2$.  Thus
$\rho$ is the pure projector onto one such common eigenstate.  Applying
independent local rotations that send $(\hat a_1,\hat a_2)$,
$(\hat b_1,\hat b_2)$, $(\hat c_1,\hat c_2)$ each to $(\xh,\yh)$ turns
$\{A_1,A_2,A_3,A_4\}$ into the standard GHZ--Mermin stabiliser operators
$\{\sigma_{xxx},\sigma_{xyy},\sigma_{yxy},\sigma_{yyx}\}$, whose common
eigenbasis is the GHZ basis~\eqref{eq:GHZbasis}.  Hence the eigenstate is
LU equivalent to $\ket{\GHZ}$, and conversely $\ket{\GHZ}$ attains
$\Elu=1$ at orthogonal frames.
\end{proof}

\subsection{Values on canonical states}\label{sec:lu-values}

\textit{GHZ and W.}
Proposition~\ref{prop:satlu} gives $\Elu(\ket{\GHZ})=1$.  For the W state
the maximisation over independent frames does \emph{not} exceed the
aligned-frame value, a consequence of the permutation symmetry of
$\ket\W$ and its invariance under collective rotations about $\zh$: the
optimum is already realised by the aligned configuration $\nv_1=\zh$,
$\nv_2\in\xh\text{-}\yh$ of Sec.~\ref{sec:W}, so that
\begin{equation}\label{eq:Elu-W}
\Elu(\ket\W)=\tfrac12\cdot\tfrac{35}{27}=\frac{35}{54}\approx0.648,
\end{equation}
which we have confirmed to machine precision by independent-frame
optimisation.  The strict ordering
$\Elu(\GHZ)=1>\Elu(\W)=35/54$ persists.

\textit{Product states (exact).}
For a pure product state the correlation tensor factorises,
$\langle\sigma_i\sigma_j\sigma_k\rangle=a_i b_j c_k$ with unit Bloch
vectors $\vec a,\vec b,\vec c$.  Writing $p_r=\vec a\!\cdot\!\hat a_r$,
$q_r=\vec b\!\cdot\!\hat b_r$, $s_r=\vec c\!\cdot\!\hat c_r$
($r=1,2$, $p_1^2+p_2^2\le1$, etc.),
\begin{equation}
\Igen=p_1q_1s_1\bigl(1-p_2^2q_2^2s_2^2\bigr),
\end{equation}
whence $|\Igen|\le|p_1q_1s_1|\le1$, with equality at
$\hat a_1=\vec a$, $\hat b_1=\vec b$, $\hat c_1=\vec c$.  Therefore
\begin{equation}\label{eq:Elu-prod}
\Elu(\textnormal{product})=\tfrac12 .
\end{equation}

\textit{Biseparable states.}
For $A|BC$ biseparable $\rho=\rho_A\otimes\rho_{BC}$ let $u=\vec
a\!\cdot\!\hat a_1$, $u'=\vec a\!\cdot\!\hat a_2$ (with $u^2+u'^2\le1$) and
let $f_{jk}=\hat b_j^{\mathsf T}S\,\hat c_k$ be the relevant correlators of
the two-qubit block, where $S_{jk}=\langle\sigma_j\otimes\sigma_k
\rangle_{BC}$.  Since every singular value of a two-qubit correlation
matrix is at most one, the $2\times2$ array $[f_{jk}]=B^{\mathsf T}SC$
(with $B=[\hat b_1,\hat b_2]$, $C=[\hat c_1,\hat c_2]$) has operator norm
$\le1$, so $|f_{jk}|\le1$ and $f_{11}^2+f_{12}^2\le1$ along each row and
column.  Then
\begin{equation}
|\Igen|=|u\,f_{11}-u\,u'^2 f_{22}f_{12}f_{21}|
\le|u|\,(1+u'^2)\le \frac{4\sqrt6}{9},
\end{equation}
the last step maximising $|u|(1+u'^2)$ over $u^2+u'^2\le1$.  Hence
\begin{equation}\label{eq:Elu-bisep-bound}
\Elu(A|BC)\;\le\;\frac{2\sqrt6}{9}\approx0.544\;<\;1 ,
\end{equation}
already enough to exclude biseparable states from the maximal value.  The
triangle bound~\eqref{eq:Elu-bisep-bound} is, however, not tight: the four
two-qubit operators entering~\eqref{eq:Istar} include anticommuting pairs
(for instance $\sigma_{\hat b_1}\sigma_{\hat c_1}$ and
$\sigma_{\hat b_1}\sigma_{\hat c_2}$ anticommute, since they share
$\hat b_1$ but differ by the orthogonal pair $\hat c_1\perp\hat c_2$), so
they cannot be simultaneously $\pm1$.  Numerically maximising $\Elu$ over
all biseparable $A|BC$ states we find the sharp value
\begin{equation}\label{eq:Elu-bisep}
\sup_{\rho\,\in\,A|BC}\Elu(\rho)=\tfrac12 ,
\end{equation}
attained e.g.\ by $\ket0_A\otimes\ket{\Phi^+}_{BC}$ and coinciding with
the product-state value~\eqref{eq:Elu-prod}; the analytic proof of the
exact constant $1/2$ remains open.  The same $1/2$ holds for the $B|AC$
and $C|AB$ partitions by symmetry.

\subsection{Summary of properties}\label{sec:lu-props}

\begin{proposition}\label{prop:measure}
The functional $\Elu$ satisfies:
\begin{enumerate}
\item[\textnormal{(P1)}] $0\le\Elu(\rho)\le 1$;
\item[\textnormal{(P2)}] $\Elu$ is invariant under arbitrary local
  unitaries \emph{(Proposition~\ref{prop:luequiv})};
\item[\textnormal{(P3)}] $\Elu(\rho)=1$ if and only if $\rho$ is LU
  equivalent to $\ket{\GHZ}\bra{\GHZ}$ \emph{(Proposition~\ref{prop:satlu})};
\item[\textnormal{(P4)}] $\Elu(\ket\W)=35/54\approx 0.648$;
\item[\textnormal{(P5)}] $\Elu(\textnormal{product})=1/2$ exactly
  ; $\Elu\le 2\sqrt6/9$ for every biseparable state
  , with sharp value $1/2$ \emph{(numerical)};
\item[\textnormal{(P6)}] $\Elu$ is conjectured convex,
  $\Elu(\sum_i p_i\rho_i)\le\sum_i p_i\Elu(\rho_i)$ \emph{(supported
  numerically; see below and App.~\ref{app:convex})}.
\end{enumerate}
\end{proposition}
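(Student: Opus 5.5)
The statement is a summary, so the plan is to discharge each item from results already established, and to be explicit about which items rest only on numerics. The only genuinely new analytic ingredient is the lower bound in (P1).

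\textbf{(P1)--(P3).} Nonnegativity is immediate, since $\Elu$ is one half of a supremum of absolute values. The upper bound $\Elu\le1$ follows from the triangle-inequality estimate $|\Igen|\le|\langle A_1\rangle|+|\langle A_2\rangle\langle A_3\rangle\langle A_4\rangle|\le2$, which uses $A_i^2=\openone$. (P2) is exactly Proposition~\ref{prop:luequiv}. There, conjugation by $U_A\otimes U_B\otimes U_C$ is absorbed into the supremum over independent orthonormal frames, because $SO(3)$ acts transitively on ordered orthonormal pairs. (P3) is Proposition~\ref{prop:satlu}, and I would simply cite it. Its content is that saturation forces $|\langle A_i\rangle|=1$ for all $i$, and the commuting relations~\eqref{eq:Acommute} together with $A_1A_2A_3A_4=-\openone$ then pin $\rho$ to a one-dimensional joint eigenspace that is LU-equivalent to a GHZ basis vector.

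\textbf{(P4).} The lower bound $\Elu(\W)\ge35/54$ comes from the aligned choice $\nv_1=\zh$, $\nv_2=\xh$ in Eq.~\eqref{eq:I-W-zx}; aligned frames are a special case of independent frames. The matching upper bound is the delicate part. App.~\ref{app:Wmax} only controls aligned frames, so I would handle general frames as follows.
\begin{enumerate}
\item[1.] Write each frame vector by its $\zh$-component and its in-plane part, and use the correlators~\eqref{eq:Wcorr}. Each correlator then becomes $-\alpha_3\beta_3\gamma_3+\tfrac23(\text{in-plane pairings times the remaining }z\text{-component})$.
\item[2.] Bound $|\langle A_1\rangle|\le1$.
\item[3.] Use that $A_2,A_3,A_4$ share frame vectors with $A_1$ through orthogonality constraints. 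The aim is to show $|\langle A_2\rangle\langle A_3\rangle\langle A_4\rangle|\le 8/27$ whenever $|\langle A_1\rangle|$ is close to $1$, and more generally $|\Igen|\le 35/27$.
\end{enumerate}
I expect step 3 to be the main obstacle. The first thing I would try is to exploit the $U(1)$ symmetry about $\zh$ and the permutation symmetry to reduce to the three $z$-components plus relative angles, and then run a Lagrangian analysis as in App.~\ref{app:Wmax}. Failing that, I would state the upper bound as a machine-precision numerical fact, as the paper does in Sec.~\ref{sec:lu-values}.

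\textbf{(P5).} For product states, the exact formula $\Igen=p_1q_1s_1(1-p_2^2q_2^2s_2^2)$ gives $|\Igen|\le1$, with equality attained. For mixed fully separable states I would need convexity, or else an argument that the supremum is attained on extreme points; I would flag this. The biseparable bound $2\sqrt6/9$ follows from the singular-value argument in Sec.~\ref{sec:lu-values}. For a mixed biseparable state, however, $\Igen$ is not affine in $\rho$, because the triple product is cubic. The proven bound therefore applies directly only to product-form $\rho_A\otimes\rho_{BC}$, and I would state it explicitly in that form. The sharp value $1/2$ is labelled numerical.

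\textbf{(P6).} Only a conjecture is claimed, so no proof is needed. I would note that $\Elu$ is a supremum of functions of the form $|\langle A_1\rangle-\prod_{i=2}^4\langle A_i\rangle|$, which are not convex in $\rho$, so convexity cannot follow from a sup-of-convex argument. That explains why it remains numerical and is deferred to App.~\ref{app:convex}.
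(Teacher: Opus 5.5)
Your plan follows the paper item by item: (P1)--(P3) via the triangle bound and Propositions~\ref{prop:luequiv} and~\ref{prop:satlu}, (P5) via the product formula and the singular-value estimate, and (P6) left as a conjecture. Your caveat that those estimates cover only product-form $\rho_A\otimes\rho_{BC}$ is accurate. The genuine gap is the upper bound in (P4). Your step~3 states a goal, not an argument. The paper gives only a symmetry heuristic plus numerics, so neither text establishes $\Elu(\W)\le 35/54$.

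The missing idea is to work in the joint eigenbasis of the commuting $A_i$ instead of running a Lagrangian over three frames. For fixed frames, let $p_k$ be the weights of $\rho$ on the eight joint eigenvectors, labelled by $(\epsilon_2,\epsilon_3,\epsilon_4)\in\{\pm1\}^3$ with $\epsilon_1=-\epsilon_2\epsilon_3\epsilon_4$. Then $\Igen=-(t+m_2m_3m_4)$, with $t=\mathbb{E}[\epsilon_2\epsilon_3\epsilon_4]$ and $m_i=\mathbb{E}[\epsilon_i]$. After relabelling (a global sign flip, then flipping two of the $\epsilon_i$), it suffices to bound $t+m_2m_3m_4$ from above when all $m_i>0$; otherwise it is at most $t\le1$. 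The function $x_2+x_3+x_4+x_2x_3x_4$ equals $\pm4$ at $\pm(1,1,1)$ and vanishes elsewhere. Hence $m_2+m_3+m_4=4q-t$, where $q=p_{+++}-p_{---}\le c\equiv\max_k p_k$. AM--GM gives $t+m_2m_3m_4\le t+\bigl(\tfrac{4q-t}{3}\bigr)^3$. The map $s\mapsto s+\bigl(\tfrac{4q-s}{3}\bigr)^3$ is nondecreasing while $\tfrac{4q-s}{3}\in[0,1]$, so raising $t$ to $\min(1,4q)$ yields
\[
|\Igen|\;\le\;\max\Bigl\{1,\;1+\Bigl(\tfrac{4c-1}{3}\Bigr)^{3}\Bigr\}.
\]
Every joint eigenvector is an LU image of $\ket{\GHZ}$. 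W-class states have fidelity at most $3/4$ with such vectors (Ac\'in, Bru\ss, Lewenstein and Sanpera, Phys.\ Rev.\ Lett.\ 87, 040401 (2001)). Hence $c\le3/4$ and $|\Igen|\le35/27$, which proves (P4) exactly.

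The same decomposition shows that (P6) is not merely open but false, which vindicates your caution about mixed biseparable states. Let $\ket{\chi_C}=(\ket{001}+\ket{110})/\sqrt2$, let $\ket{\chi_A},\ket{\chi_B}$ be its permutations, and set $\rho=\tfrac12\ket{\GHZ}\bra{\GHZ}+\tfrac16\sum_X\ket{\chi_X}\bra{\chi_X}$. Take $\hat a_1=\hat b_1=\hat c_1=\xh$ and $\hat a_2=\hat b_2=\hat c_2=\yh$. Then $\langle A_1\rangle=1$ and $\langle A_2\rangle=\langle A_3\rangle=\langle A_4\rangle=-\tfrac13$, so $\Elu(\rho)\ge14/27>1/2$.

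However, $\tfrac12(\ket{\GHZ}\bra{\GHZ}+\ket{\chi_C}\bra{\chi_C})=\tfrac12\bigl(\ket{\Phi^+}\bra{\Phi^+}_{AB}\otimes\ket{+}\bra{+}+\ket{\Phi^-}\bra{\Phi^-}_{AB}\otimes\ket{-}\bra{-}\bigr)$. So $\rho$ is a uniform mixture of six pure biseparable states, and each of them has $\Elu\le1/2$. To see this, note that for $\rho_A\otimes\rho_{BC}$ the operators $\sigma_{\hat b_1}\sigma_{\hat c_1}$ and $\sigma_{\hat b_1}\sigma_{\hat c_2}$ anticommute, so $f_{11}^2+f_{12}^2\le1$. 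Cauchy--Schwarz then gives $|\Igen|\le|u|(|f_{11}|+u'^2|f_{12}|)\le|u|\sqrt{1+u'^4}\le1$, using $x\bigl(1+(1-x)^2\bigr)\le1$ on $[0,1]$. This also proves the ``numerical'' constant $1/2$ for product-form states.

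Consequently $\Elu$ is not convex, and the GME witness at threshold $1/2$ fails outright. On the other hand, biseparable mixtures have $c\le1/2$, so the display gives $\Elu\le14/27<2\sqrt6/9$ for every biseparable state, mixed or not. The $2\sqrt6/9$ clause of (P5) therefore holds in its strong reading, even though neither your argument nor the paper's reaches mixtures.
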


We stress the status of (P6).  For \emph{fixed} frames the integrand
$|\langle A_1\rangle-\langle A_2\rangle\langle A_3\rangle\langle A_4\rangle|$
is the modulus of a linear term minus a \emph{cubic} term in $\rho$, and
is \emph{not} convex: sampling random pairs of states we find explicit
midpoint violations of fixed-frame convexity in roughly $16\%$ of cases. What we do observe is that the
\emph{optimised} measure $\Elu$ behaves convexly on every mixture we have
tested: along the GHZ--W segment, for example,
$\Elu\bigl(\tfrac12\ket{\GHZ}\!\bra{\GHZ}+\tfrac12\ket\W\!\bra\W\bigr)
\approx0.374$ lies well below the chord value $\approx0.824$, and dips
below $1/2$ for balanced mixtures.  A proof (or refutation) of global
convexity is left open.

\textit{A conditional GME witness.}
If $\Elu$ is convex (P6), then it is a genuine-multipartite-entanglement
witness above the biseparable ceiling: any state with
$\Elu(\rho)>1/2$ cannot be written as a mixture of biseparable states
across the three bipartitions, since each biseparable pure state obeys
$\Elu\le1/2$ by~\eqref{eq:Elu-bisep} and convexity would extend the bound
to their mixtures.  We have not found a counterexample, but absent a
convexity proof we state the GME-witness property as conditional.

\textit{Comparison with other measures.}
Unlike the three-tangle $\tau_3$~\cite{ckw2000} or the
GHZ$\oplus$W-restricted closed forms of Eltschka \emph{et
al.}~\cite{eltschka2008}, both of which vanish on the W class,
$\Elu$ takes nontrivial values on the W class and on biseparable
states: it tells GHZ from W (with
$\Elu(\GHZ)=1>\Elu(\W)=35/54$) and from biseparable states
(with $\Elu(\textnormal{bisep})\le 1/2<\Elu(\W)$).  In contrast
to GHZ-fidelity-based witnesses
$F_\GHZ(\rho)=\bra{\GHZ}\rho\ket{\GHZ}$~\cite{toth2005,bourennane2004},
$\Elu$ is constructed from local Pauli measurements alone and, by
Proposition~\ref{prop:luequiv}, is invariant under arbitrary local
unitaries, so it does not depend on a choice of GHZ representative.
Compared to the genuine multipartite negativity~\cite{hofmann2014} or the
GME-concurrence of Ma \emph{et al.}~\cite{ma2011}, both of which involve
semidefinite-programming or convex-roof constructions, the aligned
functional $\Eghz$ has a closed form on every pure Ac\'in state; the
invariant $\Elu$ trades that closed form for an optimisation over three
local frames, while remaining device-independent in the operational
sense that only spin directions, agreed locally, need be measured.

\textit{Strict-monotonicity caveat.}
We do not claim $\Elu$ is a strict entanglement monotone (i.e.\ monotone
under all local operations and classical communication).  Constructing a
LOCC-monotone GHZ-class measure that agrees with $\Elu$ on the GHZ orbit
and vanishes on the W class is an open problem; one natural candidate is
the convex roof
$\Elu^{\textnormal{cr}}(\rho)=\inf_{\{p_i,\ket{\psi_i}\}}\sum_i
p_i\,\Elu(\ket{\psi_i})$, which is convex by construction and, being
built from the LU-invariant pure-state function $\Elu$, is itself
LU invariant; it loses the comparative simplicity of~\eqref{eq:Elu}.

\section{Extension to three-qudit systems}\label{sec:qudit}

We now sketch a generalisation of $I$ from qubits ($d=2$) to qudits
($d\ge 2$).  Two natural strategies exist: (a) replace the spin
observables $\sigma_{\nv}$ by the higher-spin generalisations
$S_{\nv}=\nv\cdot\bm{S}$ acting on the symmetric subspace, or
(b) replace them by the unitary Heisenberg--Weyl
operators~\cite{collins2002,cerf2002,lawrence2014,ryu2014}.  We
adopt the latter, which preserves the algebraic clarity of
Sec.~\ref{sec:operators} and is directly compatible with the qudit
GHZ paradox of Cerf, Massar, and Pironio~\cite{cerf2002} and its
multisetting refinement by Ryu \emph{et al.}~\cite{ryu2014}.

\textit{Heisenberg--Weyl operators.}
On a single qudit $\mathbb{C}^d$, with $\omega=\mathrm{e}^{2\pi\mathrm{i}/d}$, define
\begin{equation}
X\ket j=\ket{j+1\bmod d},\qquad Z\ket j=\omega^j\ket j.
\end{equation}
For $\bm{p}=(p,q)\in\mathbb{Z}_d^2$, the Weyl operator is
\begin{equation}
W(\bm{p})=\omega^{-pq/2}\,X^p Z^q.
\end{equation}
The $W(\bm{p})$ are unitary (not Hermitian for $d>2$) and obey
\begin{align}
W(\bm p)W(\bm q)&=\omega^{\langle\bm p,\bm q\rangle}W(\bm q)W(\bm p),\\
\langle\bm p,\bm q\rangle&\equiv p_1 q_2-p_2 q_1\pmod d,
\end{align}
together with $W(\bm p)^d=\openone$ for odd $d$.
For $d=2$, $W(0,0)=\openone$, $W(1,0)=\sigma_x$, $W(0,1)=\sigma_z$,
and $W(1,1)\propto\sigma_y$, so the Pauli operators are recovered.

\textit{Three-qudit GHZ state.}
The qudit GHZ state is
$\ket{\GHZ_d}=d^{-1/2}\sum_{j=0}^{d-1}\ket{jjj}$.  It satisfies the
perfect-correlation identities $X^{\otimes 3}\ket{\GHZ_d}=\ket{\GHZ_d}$
and $(Z^{q_1}\otimes Z^{q_2}\otimes Z^{q_3})\ket{\GHZ_d}=\ket{\GHZ_d}$
when $q_1+q_2+q_3\equiv 0\pmod d$.  The qudit Mermin
paradox~\cite{cerf2002,lawrence2014} exploits these algebraic relations
to produce a complex-valued analogue of the dichotomic GHZ identity.

\textit{Qudit functional.}
Fix two pairs $\bm g_1,\bm g_2\in\mathbb{Z}_d^2$ and define
\begin{align}
G_1&=W(\bm g_1)\!\otimes\!W(\bm g_2)\!\otimes\!W(\bm g_2),\nonumber\\
G_2&=W(\bm g_2)\!\otimes\!W(\bm g_1)\!\otimes\!W(\bm g_2),\nonumber\\
G_3&=W(\bm g_2)\!\otimes\!W(\bm g_2)\!\otimes\!W(\bm g_1),\nonumber\\
G_4&=W(\bm g_1)\!\otimes\!W(\bm g_1)\!\otimes\!W(\bm g_1).
\end{align}
A short calculation analogous to~\eqref{eq:O1O2O3-O4} gives
\begin{equation}
G_1 G_2 G_3 = \omega^{2\langle\bm g_1,\bm g_2\rangle}G_4
\end{equation}
when $\bm g_2$ is twice-applied (using $W(\bm p)^2=\omega^{-pq}W(2\bm p)$
for the Heisenberg--Weyl group; here $W(\bm g_2)^2=\openone$ in the
case $2\bm g_2\equiv 0\bmod d$ that arises naturally for $d=2$).
Define the qudit functional
\begin{equation}\label{eq:I-qudit}
I_d(\bm g_1,\bm g_2)\;=\; \langle G_4 \rangle - \omega^{2\langle\bm g_1,\bm g_2\rangle} \langle G_1\rangle\langle G_2\rangle\langle G_3\rangle.
\end{equation}
For $d=2$, with $\bm g_1=(1,0)$ (so $W(\bm g_1)=\sigma_x$) and
$\bm g_2=(0,1)$ (so $W(\bm g_2)=\sigma_z$), the symplectic form is
$\langle\bm g_1,\bm g_2\rangle=1$, $\omega=-1$, and~\eqref{eq:I-qudit}
reduces (after relabelling $\sigma_z\leftrightarrow\sigma_y$) to the original $I(\xh,\yh)$.

\textit{Bound and saturation.}
Each $\langle G_i\rangle$ is a complex number with $|\langle G_i\rangle|\le1$,
hence $|I_d|\le 2$.  Saturation occurs at the qudit GHZ state when
$\bm g_1$ and $\bm g_2$ generate a maximally
non-commutative subgroup of the single-qudit Heisenberg--Weyl group;
the proof is structurally identical to that of Proposition~\ref{thm:max}
once one reads ``orthogonality'' as
``maximal symplectic non-commutation'' $\langle\bm g_1,\bm g_2\rangle\ne 0$
$\bmod \; d$.  We omit the detailed argument; see~\cite{cerf2002,lawrence2014}
for parallel constructions.

\textit{Spin-$S$ alternative.}
For experimental settings in which spin-$S$ measurements are natural
(e.g.\ atomic ensembles), one may instead replace
$\sigma_{\nv}/2$ by the projection $S_{\nv}=\nv\cdot\bm S$ rescaled to
have spectrum in $[-1,1]$ via a normalisation factor $1/S$.  All
qualitative statements survive, but eigenvalues are no longer
dichotomic and Proposition~\ref{thm:max} becomes a strict inequality with
saturation only in the $S\to\infty$ classical limit.

\section{Conclusion}\label{sec:conclusion}

We have introduced a real-valued functional $I(\nv_1,\nv_2)$ on
three-qubit states, built from four locally measurable Pauli
correlations, that simultaneously
\begin{itemize}
\item reproduces the GHZ algebraic paradox at $\nv_1\perp\nv_2$,
\item obeys a tight quantum bound $|I|\le 2$ that is saturated only on
  the GHZ orbit (Proposition~\ref{thm:max}),
\item admits a closed-form expression on the Ac\'in canonical family
  showing that $I(\xh,\yh)$ depends only on $\lambda_0\lambda_4$
  [Eq.~\eqref{eq:I-acin}],
\item gives a sharp distinction between the GHZ and W classes,
  with $\max|I_\W|=35/27$ [Eq.~\eqref{eq:I-W-max}],
\item induces, after maximisation over independent local frames, a
  manifestly LU-invariant measure $\Elu\in[0,1]$ that equals one exactly
  on the GHZ orbit, takes the value $35/54$ on the W class, and is
  bounded by $1/2$ on biseparable and product states
  (Sec.~\ref{sec:measure}),
\item generalises naturally to three qudits via the Heisenberg--Weyl
  operators [Eq.~\eqref{eq:I-qudit}].
\end{itemize}

A number of natural extensions remain open.  First, while $\Elu$ is
LU-invariant by construction, we have not settled whether it is globally
convex; our numerical evidence is consistent with convexity, and a proof
would immediately upgrade $\Elu$ to a genuine-multipartite-entanglement
witness above the biseparable ceiling $1/2$ and motivate the convex-roof
monotone $\Elu^{\textnormal{cr}}$.  Establishing monotonicity under the
full SLOCC hierarchy would promote it to a bona fide entanglement
monotone.  Second, the qudit construction
in~\eqref{eq:I-qudit} merits independent quantitative study: in
particular, the qudit analogue of the closed-form Ac\'in evaluation
in~\eqref{eq:I-acin} would shed light on the high-dimensional
analogue of the $\lambda_0\lambda_4$ ``GHZ slot.''  Third,
$I(\nv_1,\nv_2)$ generalises immediately to $N$-qubit
GHZ-Mermin functionals~\cite{ardehali1992,bk1993,werner2001}; an
$N$-party version of the entanglement measure $\Elu$ would furnish
a uniform indicator of GHZ-type genuine multipartite entanglement,
with maximal value $2^{N-1}/2$ on the $N$-qubit GHZ state.
Experimental implementation requires only Pauli-string correlation
measurements, well within reach of current photonic and trapped-ion
platforms~\cite{bouwmeester1999,pan2000}.

\section*{Acknowledgments}

\appendix

\section{Operator identities of Section~\ref{sec:operators}}\label{app:identities}

Using $\sigma_{\nv_1}\sigma_{\nv_2}=c\,\openone+\mathrm{i}\,\bm m\cdot\sgma$ with
$c=\nv_1\cdot\nv_2$, $\bm m=\nv_1\times\nv_2$,
\begin{align}
O_1 O_2&=\sigma_{\nv_1}\sigma_{\nv_2}\otimes\sigma_{\nv_2}\sigma_{\nv_1}\otimes\openone\\
&=(c+\mathrm{i}\bm m\cdot\sgma)\otimes(c-\mathrm{i}\bm m\cdot\sgma)\otimes\openone,\\
O_2 O_1&=(c-\mathrm{i}\bm m\cdot\sgma)\otimes(c+\mathrm{i}\bm m\cdot\sgma)\otimes\openone,
\end{align}
so that
$[O_1,O_2]=2\mathrm{i}c\,(\bm m\cdot\sgma\otimes\openone-\openone\otimes\bm m\cdot\sgma)\otimes\openone$,
which is~\eqref{eq:O1O2-comm}.

For~\eqref{eq:O1O2O3-O4}, one computes
\begin{align}
O_1 O_2 O_3
&=\sigma_{\nv_1}\sigma_{\nv_2}^2\otimes\sigma_{\nv_2}\sigma_{\nv_1}\sigma_{\nv_2}
\otimes\sigma_{\nv_2}^2\sigma_{\nv_1}\\
&=\sigma_{\nv_1}\otimes\bigl(\sigma_{\nv_2}\sigma_{\nv_1}\sigma_{\nv_2}\bigr)\otimes\sigma_{\nv_1}.
\end{align}
The middle factor satisfies
$\sigma_{\nv_2}\sigma_{\nv_1}\sigma_{\nv_2}=2c\,\sigma_{\nv_2}-\sigma_{\nv_1}$,
since $\{\sigma_{\nv_1},\sigma_{\nv_2}\}=2c\openone$.  Substituting,
\begin{equation}
O_1 O_2 O_3=2c\,\sigma_{\nv_1}\otimes\sigma_{\nv_2}\otimes\sigma_{\nv_1}-O_4,
\end{equation}
which is~\eqref{eq:O1O2O3-O4}.

\section{Closed form for the Ac\'in family}\label{app:acin}

For $\ket\Psi$ as in~\eqref{eq:acin}, the only $\sigma_x^{\otimes3}$
matrix elements between basis states present in $\ket\Psi$ that survive
are $\bra{000}\sigma_x^{\otimes3}\ket{111}=1$ and its conjugate.  Hence
$\langle\sigma_x^{\otimes3}\rangle_\Psi=2\lambda_0\lambda_4$.  For
$\sigma_x\sigma_y\sigma_y$, $\sigma_x\otimes\sigma_y\otimes\sigma_y\ket{000}
=-\ket{111}$ and $\sigma_x\otimes\sigma_y\otimes\sigma_y\ket{111}=-\ket{000}$;
the only contributing matrix elements are again
$\bra{000}\cdots\ket{111}=-1$ and $\bra{111}\cdots\ket{000}=-1$, so
$\langle\sigma_x\sigma_y\sigma_y\rangle_\Psi=-2\lambda_0\lambda_4$.  The
remaining basis states $\ket{100},\ket{101},\ket{110}$ contribute zero
to all four expectations because the action of any tensor product of
$\sigma_x,\sigma_y$ that flips the first qubit cannot return any of
$\{\ket{000},\ket{100},\ket{101},\ket{110},\ket{111}\}$ to itself with
nonzero coefficient. \hfill$\blacksquare$

\section{Maximum of $|I_\W|$ over directions}\label{app:Wmax}

\textit{Reduction to two scalar variables.}
Using~\eqref{eq:Wcorr},
\begin{align}
\langle\sigma_{\nv_1}\otimes \sigma_{\nv_2}\otimes \sigma_{\nv_2}\rangle_\W
&=\sum_{i,j,k}n^{(1)}_i n^{(2)}_j n^{(2)}_k\langle\sigma_i \otimes \sigma_j \otimes \sigma_k\rangle_\W\\
&=\tfrac23\bigl[2 b_3(\nv_1\cdot\nv_2)
   -2 a_3 b_3^2+a_3(1-b_3^2)\bigr]\nonumber\\
&\quad{}-a_3 b_3^2,
\end{align}
where $a_3=\zh\cdot\nv_1$, $b_3=\zh\cdot\nv_2$, and $n^{(1)}, n^{(2)}$
are the cartesian components of $\nv_1,\nv_2$.  When
$\nv_1\cdot\nv_2=0$, the first square-bracket term simplifies
giving
\begin{equation}
\langle\sigma_{\nv_1}\otimes \sigma_{\nv_2}\otimes \sigma_{\nv_2}\rangle_\W
=a_3\bigl(\tfrac23-3b_3^2\bigr).
\end{equation}
By the permutation symmetry of $\ket\W$, the other two
``one-$\nv_1$, two-$\nv_2$'' correlators take the same value.
Similarly,
\begin{equation}
\langle\sigma_{\nv_1}^{\otimes 3}\rangle_\W=a_3(2-3 a_3^2),
\end{equation}
yielding~\eqref{eq:I-W-formula}.

\textit{Optimisation.}
Let $f(u,v)=u^3(\tfrac23-3v^2)^3-u(2-3u^2)$ with constraints
$u^2+v^2\le 1$, $u,v\in[-1,1]$.  Critical points in the interior:
$\partial_v f=-18 u^3 v(\tfrac23-3v^2)^2$ vanishes at $v=0$, $u=0$, or
$v^2=2/9$.  The branch $v=0$ reduces $f$ to $\tfrac{89}{27}u^3-2u$,
maximised at the boundary $u=\pm 1$ giving $|f|=35/27$.  The branch
$v^2=2/9$ gives $f=-u(2-3u^2)$, maximised under $u^2\le 7/9$ at
$u^2=2/9$ with $|f|=4\sqrt{2}/9\approx 0.629<35/27$.  On the
boundary $u^2+v^2=1$, $f(u)=u^3(3u^2-7/3)^3-u(2-3u^2)$ is again
maximised at $u=\pm1$, $v=0$.  Hence $\max|f|=35/27$, attained at
$\nv_1=\pm\zh$, $\nv_2$ in the $\xh\!\text{-}\yh$ plane.\hfill$\blacksquare$

\section{Independent-frame operators and convexity}\label{app:convex}

\textit{Commutation and the stabiliser relation.}
We verify~\eqref{eq:Acommute} for the operators~\eqref{eq:Aops}.  Two
tensor products of single-qubit Hermitian unitaries commute when they
anticommute at an even number of sites and anticommute when they do so at
an odd number.  At each party the two vectors appearing across
$\{A_1,A_2,A_3,A_4\}$ are either equal (so the single-qubit factors
commute) or orthogonal (so, by
$\sigma_{\hat u}\sigma_{\hat v}=-\sigma_{\hat v}\sigma_{\hat u}$ for
$\hat u\perp\hat v$, they anticommute).  Listing the per-site relations:
each pair $(A_i,A_j)$ differs by orthogonal factors at exactly two of the
three sites — e.g.\ $A_1=(\hat a_1,\hat b_1,\hat c_1)$ and
$A_2=(\hat a_1,\hat b_2,\hat c_2)$ agree on $A$ and are orthogonal on $B$
and $C$.  Two anticommuting sites give an even count, so all six pairs
commute.  For the product, reading the four operators site by site,
\begin{align}
\text{site }A:&\ \sigma_{\hat a_1}\sigma_{\hat a_1}
   \sigma_{\hat a_2}\sigma_{\hat a_2}=\openone,\nonumber\\
\text{site }B:&\ \sigma_{\hat b_1}\sigma_{\hat b_2}
   \sigma_{\hat b_1}\sigma_{\hat b_2}
   =(\sigma_{\hat b_1}\sigma_{\hat b_2})^2=-\openone,\nonumber\\
\text{site }C:&\ \sigma_{\hat c_1}\sigma_{\hat c_2}
   \sigma_{\hat c_2}\sigma_{\hat c_1}=\openone,\nonumber
\end{align}
using $\sigma_{\hat u}^2=\openone$ and
$(\sigma_{\hat b_1}\sigma_{\hat b_2})^2=
(\mathrm i\,\hat b_3\!\cdot\!\sgma)^2=-\openone$ for the orthonormal pair
$\hat b_1\perp\hat b_2$.  Hence
$A_1A_2A_3A_4=\openone\otimes(-\openone)\otimes\openone=-\openone$.
Finally $A_1A_2A_3=-A_4$ (the same computation without the fourth
factor), so only three of the four operators are independent; the abelian
group they generate has order eight and its common eigenspaces are the
eight one-dimensional joint eigenspaces used in
Proposition~\ref{prop:satlu}.

\textit{Failure of fixed-frame convexity.}
Write $a_i(\rho)=\langle A_i\rangle=\tr(A_i\rho)$, each linear in $\rho$.
For \emph{fixed} frames the integrand
\begin{equation}
g(\rho)\equiv\bigl|a_1(\rho)-a_2(\rho)\,a_3(\rho)\,a_4(\rho)\bigr|
\end{equation}
is the modulus of a linear term minus a \emph{cubic} monomial in the
components of $\rho$.  A modulus of a non-affine function need not be
convex, and indeed it is not here: sampling pairs of random three-qubit
states $\rho_1,\rho_2$ and fixed random frames, the midpoint inequality
$g\bigl(\tfrac12\rho_1+\tfrac12\rho_2\bigr)\le\tfrac12 g(\rho_1)
+\tfrac12 g(\rho_2)$ is \emph{violated} in a sizeable fraction
($\approx482$ of $3000$ trials, i.e.\ about $16\%$).

\textit{Numerical convexity of the optimised measure.}
The optimised measure $\Elu$ of~\eqref{eq:Elu} is a different object: for
each $\rho$ the frames are chosen to maximise $|\Igen|$.  Empirically
$\Elu$ is convex on every mixture we have examined.  On the GHZ--W line
$\rho_t=(1-t)\ket{\GHZ}\!\bra{\GHZ}+t\ket\W\!\bra\W$ the measure lies
below the chord throughout, with $\Elu(\rho_{1/2})\approx0.374$ against a
chord value $\tfrac12(1)+\tfrac12(35/54)\approx0.824$; balanced mixtures
fall below the biseparable ceiling $1/2$.  Random two-component mixtures
of Haar pure states show no convexity violation within numerical
tolerance.  A general proof or a counterexample is left as an open
problem.\hfill$\blacksquare$

\end{document}